\newtheorem{definition}{Definition}
\newtheorem{proposition}[definition]{Proposition}
\newtheorem{lemma}[definition]{Lemma}
\newtheorem{theorem}[definition]{Theorem}
\newtheorem{corollary}[definition]{Corollary}
\newtheorem{conjecture}[definition]{Conjecture}
\newtheorem{remark}[definition]{Remark}
\newtheorem{example}[definition]{Example}
\newtheorem{question}[definition]{Question}
\newtheorem{memo}[definition]{Memo}
\def\squareforqed{\hbox{\rlap{$\sqcap$}$\sqcup$}}
\def\qed{\ifmmode\squareforqed\else{\unskip\nobreak\hfil
\penalty50\hskip1em\null\nobreak\hfil\squareforqed
\parfillskip=0pt\finalhyphendemerits=0\endgraf}\fi}
\def\endenv{\ifmmode\;\else{\unskip\nobreak\hfil
\penalty50\hskip1em\null\nobreak\hfil\;
\parfillskip=0pt\finalhyphendemerits=0\endgraf}\fi}
\newenvironment{proof}{\noindent \textbf{{Proof.~} }}{\qed}
\def\Dbar{\leavevmode\lower.6ex\hbox to 0pt
{\hskip-.23ex\accent"16\hss}D}
\def\url@leostyle{%
  \@ifundefined{selectfont}{\def\UrlFont{\sf}}{\def\UrlFont{\small\ttfamily}}}
\def\bcj{\begin{conjecture}}
\def\ecj{\end{conjecture}}
\def\bcr{\begin{corollary}}
\def\ecr{\end{corollary}}
\def\bd{\begin{definition}}
\def\ed{\end{definition}}
\def\bea{\begin{eqnarray}}
\def\eea{\end{eqnarray}}
\def\bem{\begin{enumerate}}
\def\eem{\end{enumerate}}
\def\bex{\begin{example}}
\def\eex{\end{example}}
\def\bim{\begin{itemize}}
\def\eim{\end{itemize}}
\def\bl{\begin{lemma}}
\def\el{\end{lemma}}
\def\bma{\begin{bmatrix}}
\def\ema{\end{bmatrix}}
\def\bpf{\begin{proof}}
\def\epf{\end{proof}}
\def\bpp{\begin{proposition}}
\def\epp{\end{proposition}}
\def\bqu{\begin{question}}
\def\equ{\end{question}}
\def\br{\begin{remark}}
\def\er{\end{remark}}
\def\bt{\begin{theorem}}
\def\et{\end{theorem}}
\def\bmm{\begin{memo}}
\def\emm{\end{memo}}
\def\btb{\begin{tabular}}
\def\etb{\end{tabular}}
\newcommand{\nc}{\newcommand}
\def\r{\rho}
\def\ps{\psi}
\def\G{\Gamma}
 \nc{\bbA}{\mathbb{A}} \nc{\bbB}{\mathbb{B}} \nc{\bbC}{\mathbb{C}}
 \nc{\bbD}{\mathbb{D}} \nc{\bbE}{\mathbb{E}} \nc{\bbF}{\mathbb{F}}
 \nc{\bbG}{\mathbb{G}} \nc{\bbH}{\mathbb{H}} \nc{\bbI}{\mathbb{I}}
 \nc{\bbJ}{\mathbb{J}} \nc{\bbK}{\mathbb{K}} \nc{\bbL}{\mathbb{L}}
 \nc{\bbM}{\mathbb{M}} \nc{\bbN}{\mathbb{N}} \nc{\bbO}{\mathbb{O}}
 \nc{\bbP}{\mathbb{P}} \nc{\bbQ}{\mathbb{Q}} \nc{\bbR}{\mathbb{R}}
 \nc{\bbS}{\mathbb{S}} \nc{\bbT}{\mathbb{T}} \nc{\bbU}{\mathbb{U}}
 \nc{\bbV}{\mathbb{V}} \nc{\bbW}{\mathbb{W}} \nc{\bbX}{\mathbb{X}}
 \nc{\bbZ}{\mathbb{Z}}
 \nc{\bA}{{\bf A}} \nc{\bB}{{\bf B}} \nc{\bC}{{\bf C}}
 \nc{\bD}{{\bf D}} \nc{\bE}{{\bf E}} \nc{\bF}{{\bf F}}
 \nc{\bG}{{\bf G}} \nc{\bH}{{\bf H}} \nc{\bI}{{\bf I}}
 \nc{\bJ}{{\bf J}} \nc{\bK}{{\bf K}} \nc{\bL}{{\bf L}}
 \nc{\bM}{{\bf M}} \nc{\bN}{{\bf N}} \nc{\bO}{{\bf O}}
 \nc{\bP}{{\bf P}} \nc{\bQ}{{\bf Q}} \nc{\bR}{{\bf R}}
 \nc{\bS}{{\bf S}} \nc{\bT}{{\bf T}} \nc{\bU}{{\bf U}}
 \nc{\bV}{{\bf V}} \nc{\bW}{{\bf W}} \nc{\bX}{{\bf X}}
 \nc{\bZ}{{\bf Z}}
\nc{\cA}{{\cal A}} \nc{\cB}{{\cal B}} \nc{\cC}{{\cal C}}
\nc{\cD}{{\cal D}} \nc{\cE}{{\cal E}} \nc{\cF}{{\cal F}}
\nc{\cG}{{\cal G}} \nc{\cH}{{\cal H}} \nc{\cI}{{\cal I}}
\nc{\cJ}{{\cal J}} \nc{\cK}{{\cal K}} \nc{\cL}{{\cal L}}
\nc{\cM}{{\cal M}} \nc{\cN}{{\cal N}} \nc{\cO}{{\cal O}}
\nc{\cP}{{\cal P}} \nc{\cQ}{{\cal Q}} \nc{\cR}{{\cal R}}
\nc{\cS}{{\cal S}} \nc{\cT}{{\cal T}} \nc{\cU}{{\cal U}}
\nc{\cV}{{\cal V}} \nc{\cW}{{\cal W}} \nc{\cX}{{\cal X}}
\nc{\cZ}{{\cal Z}}
\nc{\hA}{{\hat{A}}} \nc{\hB}{{\hat{B}}} \nc{\hC}{{\hat{C}}}
\nc{\hD}{{\hat{D}}} \nc{\hE}{{\hat{E}}} \nc{\hF}{{\hat{F}}}
\nc{\hG}{{\hat{G}}} \nc{\hH}{{\hat{H}}} \nc{\hI}{{\hat{I}}}
\nc{\hJ}{{\hat{J}}} \nc{\hK}{{\hat{K}}} \nc{\hL}{{\hat{L}}}
\nc{\hM}{{\hat{M}}} \nc{\hN}{{\hat{N}}} \nc{\hO}{{\hat{O}}}
\nc{\hP}{{\hat{P}}} \nc{\hR}{{\hat{R}}} \nc{\hS}{{\hat{S}}}
\nc{\hT}{{\hat{T}}} \nc{\hU}{{\hat{U}}} \nc{\hV}{{\hat{V}}}
\nc{\hW}{{\hat{W}}} \nc{\hX}{{\hat{X}}} \nc{\hZ}{{\hat{Z}}}
\nc{\hn}{{\hat{n}}}
\def\dim{\mathop{\rm Dim}}
\def\max{\mathop{\rm max}}
\def\min{\mathop{\rm min}}
\def\rank{\mathop{\rm rank}}
\def\sr{\mathop{\rm sr}}
\def\tr{\mathop{\rm Tr}}
\def\dg{\dagger}
\def\ox{\otimes}
\newcommand{\bra}[1]{\langle#1|}
\newcommand{\ket}[1]{|#1\rangle}
\newcommand{\proj}[1]{| #1\rangle\!\langle #1 |}
\newcommand{\ketbra}[2]{|#1\rangle\!\langle#2|}
\newcommand{\braket}[2]{\langle#1|#2\rangle}
\newcommand{\norm}[1]{\lVert#1\rVert}
\def\Dbar{\leavevmode\lower.6ex\hbox to 0pt
{\hskip-.23ex\accent"16\hss}D}
\begin{document}

\title{Entanglement distillation in terms of a conjectured matrix inequality}

\date{\today}

\pacs{03.65.Ud, 03.67.Mn}

\author{Yize Sun}\email[]{sunyize@buaa.edu.cn}
\affiliation{School of Mathematics and Systems Science, Beihang University, Beijing 100191, China}

\author{Lin Chen}\email[]{linchen@buaa.edu.cn (corresponding author)}
\affiliation{School of Mathematics and Systems Science, Beihang University, Beijing 100191, China}
\affiliation{International Research Institute for Multidisciplinary Science, Beihang University, Beijing 100191, China}

\begin{abstract}
Entanglement distillation is a basic task in quantum information, and the distillable entanglement of three bipartite reduced density matrices from a tripartite pure state has been studied in [Phys. Rev. A 84, 012325 (2011)]. We extend this result to tripartite mixed states by studying a conjectured matrix inequality, namely $\rank(\sum_i R_i \otimes S_i)\le K \rank(\sum_i R_i^T \otimes S_i)$ holds for any bipartite matrix $M=\sum_i R_i \otimes S_i$ and Schmidt rank $K$. We prove that the conjecture holds for $M$ with $K=3$ and some special $M$ with arbitrary $K$.
\end{abstract}

\maketitle

%\tableofcontents

\section{Introduction}

Given a bipartite state $\r$ acting on the Hilbert space
$\cH_A\ox \cH_B$, the partial transpose w.r.t. an
orthonormal basis $\{\ket{a_i}\}\in\cH_A$ is defined as
$\r^\G=\sum_{ij}\ketbra{a_j}{a_i}\ox\bra{a_i}\r\ket{a_j}$. We say that $\r$ is an $m\times n$ state if $\r_A$ and $\r_B$ respectively have rank $m$ and $n$. If $\r^\G$ is positive semidefinite, then we say that $\r$ is positive partial transpose (PPT).

The partial transpose is a matrix operation with extensive applications in quantum information theory. It is known that if $\r$ is a separable state, i.e., the convex sum of pure product states, then $\r$ is PPT. The converse is still true if $mn\le 6$. It gives the first operational criterion of approaching the separability problem, which is an NP-hard problem. Second, the bipartite PPT entangled states were constructed in 1997 \cite{horodecki1997}. Furthermore the two-qutrit PPTES of rank four was respectively fully characterized in
\cite{Chen2012Equivalence} and \cite{2011Three}. The PPT entangled states are not distillable under local operations and classical communications (LOCC), though
some PPT entangled states may construct distillable key \cite{Horodecki2008Low}. On the other hand, it is conjectured that some non-PPT (NPT) states are not distillable too. This is the long-standing distillability problem \cite{Divincenzo2000Evidence}.
In spite of much efforts devoted in the past decades, not much progress has been made \cite{hh1999, Chen2008Rank,Lin2016Non}. In the papers \cite{Chen2011Multicopy, Chen2012NONDISTILLABLE}, authors  investigated the distillability of three bipartite reduced density operators $\r_{AB},\r_{AC}$ and $\r_{BC}$ from a tripartite pure state $\ket{\ps}_{ABC}$. It indicates that the distillability of one reduced operator is restricted by the other, so it gives a novel criterion of determining the distillability. Naturally, one hopes to develop a similar criterion in terms of tripartite mixed states. However, the problem is hard due to little understanding of the three bipartite reduced density operators of mixed states. In fact, so far we cannot determine the tradeoff between the ranks of these reduced density operators.

In the paper \cite{chl14}, authors proposed a conjectured inequality describing the tradeoff. The inequality claims that $r(\r_{AB})\cdot r(\r_{BC})\ge r(\r_{AC})$ where $r(M)$ denotes the rank of $M$. In the same paper, authors have proven that the conjecture is equivalent to another conjectured inequality, namely
$r(M)\le K \cdot r(M^\G)$ holds for any bipartite matrix $M$ with $K$ being its Schmidt rank. They also showed that the inequality holds for $K=2$. In this paper, we prove that the inequality holds for $M$ with $K=3$, and a special family of matrices with arbitrary positive integer $K$. They are presented in Theorem \ref{thm:sr3} and \ref{thm:special}, respectively. The proof of Theorem \ref{thm:sr3} is based on Lemma \ref{le:2xn,sr=3} and \ref{le:kr=rr3}.
Next using the equivalence between the two inequalities above, we show that $r(\r_{AB})\cdot r(\r_{BC})\ge r(\r_{AC})$ holds when $r(\r_{BC})\le3$ in Corollary \ref{cr:rbc<=rabxrac}. We apply our results to investigate the distillability of bipartite states in Lemma \ref{le:4<=2*2}. That is if $\r_{AB}$ and $\r_{AC}$ have both rank two and $\r_{BC}$ is NPT, then $\r_{BC}$ has rank at most four. Hence $\r_{BC}$ is distillable. Furthermore, if $\r_{AB}$ has rank at most three and $\r_{BC}\in\cB(\bbC^m\otimes\bbC^n)$ is NPT and $\max\{m,n\}\ge r(\r_{AB})\cdot r(\r_{AC})$, then $\r_{BC}$ has rank at most $\max\{m,n\}$. So $\r_{BC}$ is distillable. The results show the tradeoff between the ranks of three bipartite reduced density operators. Further, they determine the distillability of one of the three bipartite reduced density operators by using that of the other two of them.
We thus manage to extend the results on distillability in \cite{Chen2011Multicopy, Chen2012NONDISTILLABLE} from tripartite pure states to mixed states.

It is known that many quantum-information tasks need entangled pure states as necessary resources, while merely mixed states exist because of to the noise in nature. So we need to convert mixed states into pure states under LOCC. This is why entanglement distillation and the distillability problem has been widely studied in theory and experiment for decades \cite{Abdelkhalek2016Efficient,Takeoka2017Unconstrained,Guo2017Entanglement, Rozp2018Optimizing,Dehaene2003Local,Datta2012Compact,Lamata2006Relativity,Dong2008Experimental, Takahashi2010Entanglement}. The problem is related to other quantum-information problems, such as the super-activation of zero capacity quantum channels and bound entanglement. Since mixed states have a more fruitful configuration in physics than pure states, our results thus shows novel understanding of distillability of bipartite states from the viewpoint of tripartite system.

The rest of this paper is organized as follows. In Sec. \ref{sec:pre} we introduce the preliminary knowledge and facts for the proofs in subsequent sections. In Sec. \ref{sec:sr3}, we prove Conjecture \ref{cj:1} for matrices of Schmidt rank three. We apply our results to investigate the distillability of bipartite reduced density matrices from the same tripartite mixed states. We further prove the conjecture for special matrices of arbitrary Schmidt rank in Sec. \ref{sec:special}. Finally we conclude in Sec. \ref{sec:con}.

\section{Preliminaries}
\label{sec:pre}

In this section we introduce the preliminary knowledge used in this paper. We introduce quantum information in Sec. \ref{subsec:qi}, linear algebra  and the main conjecture in Sec. \ref{subsec:linearalg}.

\subsection{quantum information}
\label{subsec:qi}

Let $\bbM_{m,n}$ be the set of $m\times n$ complex matrices, and $\bbM_n:=\bbM_{n,n}$. In quantum physics, we say that the positive semidefinite matrix $\r\in \bbM_m\otimes\bbM_n$ is a  bipartite quantum state on the Hilbert space $\cH_A\ox\cH_B=\bbC^m\ox\bbC^n$ with the normalization condition $\tr\r=1$. The state $\r$ is a pure state when the rank of $\r$ is one. If $\r$ has rank at least one then we say. that $\r$ is a mixed state.
We denote a pure state as $\r=\proj{\ps}$ for some unit vector $\ket{\ps}$, i.e., $\norm{\ps}:=\braket{\ps}{\ps}=1$. The partial trace w.r.t. system A is defined as
%\begin{eqnarray}
%\label{eq:rhob}
$\r_B:=\text{$\tr$}_A \r :=
%\sum_i (\bra{a_i}\otimes  I_B)\r(\ket{a_i}\otimes I_B)
%:=
\sum_i \bra{a_i}_A\r\ket{a_i}_A$,
%\end{eqnarray}
where $\ket{a_i}$ is an arbitrary orthonormal basis in $\cH_A$. We call $\r_B$ the reduced density operator of system $B$. One can similarly define the reduced density operator of system $A$ as $\r_A:=\tr_B \r$. %The definition of bipartite states can be similarly extended to $m$-partite states in $\bbM_{n_1}\otimes...\otimes\bbM_{n_m}$. For example,
%\begin{eqnarray}
%\r_C &=& \text{$\tr$}_{AB}\r_{ABC}:=
%\sum_{i,j} (\bra{a_i}\otimes\bra{b_j}\otimes  I_C)\r(\ket{a_i}\otimes\ket{b_j}\otimes I_C)
%:=
%\sum_{i,j} \bra{a_i,b_j}\r\ket{a_i,b_j},
%\\
%\r_{BC} &=& \text{$\tr$}_{A}\r_{ABC},
%\end{eqnarray}
%etc, where $\{\ket{b_j}\}$ is an orthonormal basis in $\cH_B$.

Let $\ket{x,y}:=\ket{x}\otimes\ket{y}$ for any states $\ket{x},\ket{y}$.
It is known that every bipartite pure state $\ket{\ps}$ can be written as the Schmidt decomposition. That is, $\ket{\ps}=\sum_i \sqrt{c_i} \ket{a_i,b_i}$ where $c_i\ge0$, $\sum_i c_i=1$, $\{\ket{a_i}\}$ is an orthonormal basis in $\cH_A$ and $\{\ket{b_i}\}$ is an orthonormal basis in $\cH_B$. One can obtain that the reduced density operators of $\ket{\ps}$ are $\r_A=\tr_B\r=\sum_i c_i \proj{a_i}$, and $\r_B=\tr_A\r=\sum_i c_i \proj{b_i}$. We define the partial transpose w.r.t. system A as $M^{\G}:=M^{\G_A}:=\ketbra{i}{j}_AM\ketbra{i}{j}_A$. One can similarly define $M^{\G_B}$.

\subsection{the conjecture}
\label{subsec:linearalg}

We denote $r(M)$ as the rank of matrix $M$, $M^T$ as the transpose of $M$, $M^*$ as the complex conjugate of $M$, and $M^\dg=(M^*)^T$. Let $\bbM_{m,n}$ be the set of $m\times n$ complex matrices. Any matrix $M\in \bbM_{m_1,n_1}\ox\bbM_{m_2,n_2}$ can be written as $M=\sum^{m_1}_{i=1}\sum^{n_1}_{j=1}\ketbra{i}{j}\otimes M_{i,j}$ with $M_{i,j}\in \bbM_{m_2,n_2}.$ We denote $0_q$ as the $q\times q$ zero matrix. Now we present the main problem as the following conjecture from \cite{chl14}. This is the main problem we investigate in this paper.
\bcj
 \label{cj:1}
Let $R_1,\cdots,R_k$ be $m_1\times n_1$ complex matrices, and let
$S_1,\cdots,S_k$ be $m_2\times n_2$ complex matrices. Then
 \bea
 \label{eq:cj2}
 r \bigg( \sum^K_{i=1} R_i \ox S_i \bigg)
\le
 K\cdot r \bigg( \sum^K_{i=1} R_i \ox S_i^T \bigg).
\eea
If we assume $M=\sum^K_{i=1} R_i \ox S_i$, then the inequality is equivalent to $
r(M) \le K\cdot r(M^\Gamma).	
$
\qed
\ecj

Evidently, it suffices to prove the conjecture when the matrices $R_i$'s are linearly independent, and the matrices $S_i$'s are also linearly independent. In this case, the integer $K:=K(M)$ is the \textit{Schmidt rank} of the matrix $M=\sum^K_{i=1} R_i \otimes S_i$. One can derive that
$K(M)
=
K(M^{\G_A})=K(M^{\G_B})
=
K(M^\dg)=K(M^T)=K(M^*).	
$
If $K(M)\ge\max\{r(M),r(M^\G)\}$ then Conjecture \ref{cj:1} hold. Next it follows from \cite{chl14} that Conjecture \ref{cj:1} holds for $K(M)\le2$. So it suffices to prove Conjecture \ref{cj:1} under the assumption
\begin{eqnarray}
\label{eq:k<=}
&&
2<K(M)<\max\{r(M),r(M^\G)\},
\\&&
\label{eq:2<km<=}
2<K(M)\le\min\{m_1n_1,m_2n_2\}.
\end{eqnarray}
We shall show in Lemma \ref{le:sr=m1n1} that the second inequality in \eqref{eq:2<km<=} is strict.

Next, if we multiply local invertible matrix $V_l\otimes W_l$ on the lhs of $\sum^K_{i=1} R_i \ox S_i$, then the rank is unchanged. One can obtain the similar result when the multiplication is performed on the rhs of $\sum^K_{i=1} R_i \ox S_i$. We shall refer to the multiplication as that the rank of matrix is unchanged up to local equivalence.

More explicitly, we denote locally equivalent $M,N$ as $M\sim N$, namely there exist invertible product matrix $U\otimes V$ and $W\otimes X$ such that $(U\otimes V)M(W\otimes X)=N$. Hence, proving Conjecture \ref{cj:1} is equivalent to proving it up to local equivalence. That is, proving the inequality \eqref{eq:cj2} is equivalent to proving the inequality
$
r \bigg(
(A\otimes B)
\bigg(
\sum^K_{i=1} R_i \ox S_i
\bigg)
 (C\otimes D)\bigg)
\le
K\cdot
r \bigg(
(E\otimes F)
\bigg(
\sum^K_{i=1} R_i \ox S_i^T
\bigg)
 (G\otimes H)\bigg)$,
where the matrices $A,E,L\in\bbM_{m_1}\otimes \bbM_{m_1},...,D,F,M\in\bbM_{n_2}\otimes \bbM_{n_2}$ are invertible matrices we can choose arbitrarily.

To conclude this section we present the following observation as a special case of proving Conjecture \ref{cj:1}.

\begin{lemma}
\label{le:sr=m1n1}
If $\sr(M)=\min\{m_1n_1,m_2n_2\}$, then Conjecture \ref{cj:1} holds for $M$.
\end{lemma}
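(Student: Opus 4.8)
The plan is to prove, for an \emph{arbitrary} $M$, the slightly weaker-looking bound
$r(M)\le\min\{m_1n_1,m_2n_2\}\cdot r(M^\G)$, and then observe that the hypothesis $\sr(M)=\min\{m_1n_1,m_2n_2\}$ upgrades this into exactly \eqref{eq:cj2}. The guiding idea is to decompose $M^\G$ (not $M$) into rank-one pieces and then undo the partial transpose, since $(M^\G)^\G=M$; this way the rank of $M^\G$ controls $M$ directly, and the geometry of each rank-one piece is what forces the factor $\min\{m_1n_1,m_2n_2\}$.

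First I would set $\rho:=r(M^\G)$ and write a rank decomposition $M^\G=\sum_{j=1}^{\rho}\ket{\alpha_j}\bra{\beta_j}$. Because partial transposition on the second system turns the $m_2\times n_2$ blocks of $M$ into $n_2\times m_2$ blocks, the matrix $M^\G$ is $(m_1n_2)\times(n_1m_2)$; hence $\ket{\alpha_j}\in\bbC^{m_1}\ox\bbC^{n_2}$ and $\ket{\beta_j}\in\bbC^{n_1}\ox\bbC^{m_2}$, which I reshape into an $m_1\times n_2$ matrix $\alpha_j$ and an $n_1\times m_2$ matrix $\beta_j$. Applying the partial transpose once more gives $M=(M^\G)^\G=\sum_{j=1}^{\rho}(\ket{\alpha_j}\bra{\beta_j})^\G$, so by subadditivity of rank it suffices to bound the rank of each summand.

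The main computational step is the identity $r\big((\ket{\alpha_j}\bra{\beta_j})^\G\big)=r(\alpha_j)\,r(\beta_j)$, and I expect the index bookkeeping here to be the only real obstacle. I would compute the $(p,s),(q,t)$ entry of $(\ket{\alpha_j}\bra{\beta_j})^\G$ to be $(\alpha_j)_{pt}\,\overline{(\beta_j)_{qs}}$; then for fixed $(p,s)$ the corresponding row, viewed as a vector indexed by $(q,t)$, is the rank-one tensor $\overline{(\beta_j)_{\cdot s}}\ox(\alpha_j)_{p\cdot}$. Consequently the row space equals $\lin\{\overline{(\beta_j)_{\cdot s}}\}\ox\lin\{(\alpha_j)_{p\cdot}\}$, whose dimension is the product of the column rank of $\beta_j$ and the row rank of $\alpha_j$, i.e. $r(\beta_j)\,r(\alpha_j)$.

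Finally, since $\alpha_j$ is $m_1\times n_2$ and $\beta_j$ is $n_1\times m_2$, we have $r(\alpha_j)\le\min\{m_1,n_2\}$ and $r(\beta_j)\le\min\{n_1,m_2\}$, so $r(\alpha_j)r(\beta_j)\le\min\{m_1,n_2\}\min\{n_1,m_2\}\le\min\{m_1n_1,m_2n_2\}$, the last step holding because the left-hand side is simultaneously at most $m_1n_1$ and at most $m_2n_2$. Summing the $\rho$ terms yields $r(M)\le\rho\cdot\min\{m_1n_1,m_2n_2\}$ in general. Invoking the hypothesis $\sr(M)=\min\{m_1n_1,m_2n_2\}=K$ then gives $r(M)\le K\cdot r(M^\G)$, which is precisely \eqref{eq:cj2}, and completes the proof.
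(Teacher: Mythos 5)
Your proof is correct, but it takes a genuinely different route from the paper's. The paper's own argument is a two-line block-counting bound: writing $M=[M_{ij}]$ and $k=\max_{ij} r(M_{ij})$, every block of $M^\Gamma$ is one of the $M_{ij}$ (up to transposition), so $r(M^\Gamma)\le m_1n_1k\le m_1n_1\,r(M)=\sr(M)\,r(M)$ (implicitly assuming, w.l.o.g., $m_1n_1\le m_2n_2$). Note that, read literally, the paper establishes $r(M^\Gamma)\le K\,r(M)$, i.e.\ the conjectured inequality for $M^\Gamma$ rather than for $M$; one closes this gap by observing that the hypothesis $\sr(M)=\min\{m_1n_1,m_2n_2\}$ is invariant under partial transposition, so the class of matrices covered by the lemma is $\Gamma$-symmetric. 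Your argument instead takes a rank decomposition of $M^\Gamma$ into $r(M^\Gamma)$ rank-one terms, applies $\Gamma$ again, and uses the identity $r\big((\ketbra{\alpha_j}{\beta_j})^\Gamma\big)=r(\alpha_j)\,r(\beta_j)$ for the reshaped matrices, which your index computation verifies correctly (the row space is indeed the tensor product of the column space of $\overline{\beta_j}$ and the row space of $\alpha_j$). This is longer than the paper's proof, but it buys three things: it proves the inequality in the literal direction $r(M)\le K\,r(M^\Gamma)$ of \eqref{eq:cj2}, with no appeal to $\Gamma$-symmetry; it handles both cases of the minimum at once, via $\min\{m_1,n_2\}\min\{n_1,m_2\}\le\min\{m_1n_1,m_2n_2\}$, rather than through a w.l.o.g.\ reduction; and it actually yields the universal, occasionally sharper bound $r(M)\le\min\{m_1,n_2\}\min\{n_1,m_2\}\cdot r(M^\Gamma)$ for \emph{every} $M$, the hypothesis entering only to identify the constant with $\sr(M)$. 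Both proofs share the structure ``universal inequality plus hypothesis,'' but with different tools and in opposite directions of the partial transpose.
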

\begin{proof}
Let $M=[M_{ij}]$, and $k=\max r(M_{ij})$. Then $\sr(M)\cdot r(M)=m_1n_1 \cdot r(M)\ge m_1n_1k\ge r(M^\Gamma)$. The last inequality follows from the fact that every block $M_{ij}$ has rank at most $k$.
\end{proof}

\section{Conjecture \ref{cj:1} for matrices of Schmidt rank three and applications}
\label{sec:sr3}

In this section we prove Conjecture \ref{cj:1} for matrices $M$ of Schmidt rank three. This is the first case satisfying the assumptions in Eqs. \eqref{eq:k<=} and \eqref{eq:2<km<=}. We begin by characterizing a special $M$ in Lemma \ref{le:2xn,sr=3}. Then we prove that Conjecture \ref{cj:1} holds for $M$ in Lemma \ref{le:kr=rr3}. Assisted by this lemma, we present the main result of this section in Theorem \ref{thm:sr3}.
\begin{lemma}
\label{le:2xn,sr=3}
(i) Suppose the block matrix $\bma M_{11} & M_{12}\\ M_{21} & M_{22} \ema$ has Schmidt rank three. Then it is locally equivalent to the matrix 	$N=\bma N_{11} & N_{12}\\ N_{21} & wN_{11} \ema$ where $N_{11}=\bma I_k & 0\\ 0 & 0 \ema$, $\rank N_{11}=k$, and $w$ is a complex number.

(ii) The matrix $N$ is locally equivalent to $\bma N_{11} & w^{-1}N_{12}\\ N_{21} & N_{11} \ema$ or $\bma N_{11} & N_{12}\\ N_{21} & 0 \ema$. Both of them and their partial transpose have Schmidt rank three.

\end{lemma}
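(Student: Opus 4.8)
The plan is to read the $2\times 2$ block structure as the first tensor factor and to use the characterization that the Schmidt rank of $\bma M_{11}&M_{12}\\M_{21}&M_{22}\ema$ equals the dimension of $\lin\{M_{11},M_{12},M_{21},M_{22}\}$ inside $\bbM_{m_2,n_2}$ (the $2\times2$ matrix units $E_{ij}$ being linearly independent). For part (i), Schmidt rank three says the four blocks obey a single linear relation, unique up to scaling, which I encode by a nonzero relation matrix $R=[R_{ij}]\in\bbM_2$ through $\sum_{ij}R_{ij}M_{ij}=0$. A block-level local operation $M\mapsto(U\ox I)M(W\ox I)$ replaces the blocks by those of $UMW$, under which the relation among the new blocks is governed by the equivalent matrix $PRQ$ with $P=(U^{-1})^T$ and $Q=(W^{-1})^T$; since $U,W$ range over all invertible $2\times2$ matrices, so do $P,Q$ independently, and I can therefore bring $R$ to its rank normal form.

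First I would carry out this block-level normalization. If $\rank R=1$ I steer the pivot to the $(2,2)$ entry, so the relation becomes $M_{22}=0$ and I set $w=0$; if $\rank R=2$ the normal form is diagonal with nonzero entries, so the relation reads $M_{22}=wM_{11}$ with $w\neq0$. In both cases the matrix now has the shape $\bma M_{11}&M_{12}\\M_{21}&wM_{11}\ema$. Second I would apply an interior local operation $M\mapsto(I\ox V)M(I\ox X)$, which sends every block $M_{ij}$ to $VM_{ij}X$, choosing $V,X$ to put $M_{11}$ into rank normal form $\bma I_k&0\\0&0\ema$ with $k=\rank N_{11}$. Performing the two normalizations in this order is what makes them compatible: the interior step preserves the relation because $V(wM_{11})X=w(VM_{11}X)$, so the prescribed block shape survives while $M_{11}$ is standardized. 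This yields the matrix $N$ and proves (i).

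For part (ii) the local equivalence is immediate. When $w\neq0$ I right-multiply $N$ by the invertible product matrix $\diag(1,w^{-1})\ox I_{n_2}$; the block computation turns $wN_{11}$ into $N_{11}$ and $N_{12}$ into $w^{-1}N_{12}$, producing $\bma N_{11}&w^{-1}N_{12}\\N_{21}&N_{11}\ema$. When $w=0$ the matrix $N$ already equals $\bma N_{11}&N_{12}\\N_{21}&0\ema$. The Schmidt-rank statements then follow from the invariants recorded in the preliminaries: local equivalence preserves the Schmidt rank, so each canonical form inherits Schmidt rank three from $N$; and $K(M)=K(M^{\G_A})=K(M^{\G_B})$, so the partial transposes have Schmidt rank three as well. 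For a self-contained check one observes that in each case the span of the blocks equals $\lin\{N_{11},N_{12},N_{21}\}$, which is three-dimensional because these three blocks are linearly independent---a consequence of $N$ having Schmidt rank three---while the partial transpose only permutes the blocks (under $\G_A$) or transposes each of them (under $\G_B$) and hence leaves this span's dimension unchanged.

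I expect the only genuine obstacle to be in part (i): showing that the single linear relation can always be diagonalized so as to couple exactly the two diagonal blocks, and in particular routing the rank-one normal form onto the $(2,2)$ position so that the relation appears as $M_{22}=wM_{11}$ rather than as $M_{11}=0$. Once $R$ is in normal form, everything else---the interior standardization of $N_{11}$ and all of part (ii)---is routine, reducing to the uniform transformation law $M_{ij}\mapsto VM_{ij}X$ of the blocks together with the Schmidt-rank invariance facts already available.
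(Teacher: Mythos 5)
Your proposal is correct, and its key step for part (i) takes a genuinely different route from the paper's. The paper argues directly on the blocks: after block permutations it assumes $M_{11},M_{12},M_{21}$ are linearly independent, expands $M_{22}=xM_{11}+yM_{12}+zM_{21}$, and applies one block-row shear and one block-column shear, after which the $(2,2)$ block becomes $(x+yz)M_{11}$, so $w=x+yz$; the interior normalization of $N_{11}$ by $(I\ox V)\cdot(I\ox X)$ is then the same as yours. You instead work with the dual object: the one-dimensional space of linear relations among the four blocks, encoded by a nonzero $R\in\bbM_2$ with $\sum_{ij}R_{ij}M_{ij}=0$, together with the covariance $R\mapsto (U^{-1})^T R\,(W^{-1})^T$ under $M\mapsto (U\ox I)M(W\ox I)$ --- a transformation law that checks out, since the new blocks are $M'_{kl}=\sum_{ij}U_{ki}W_{jl}M_{ij}$ --- and you reduce $R$ to rank normal form. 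This buys two things: the dichotomy $w=0$ versus $w\neq0$ is exposed a priori as $\rank R=1$ versus $\rank R=2$ rather than emerging from the value of $x+yz$, and you avoid the paper's preliminary case analysis over which triple of blocks can be assumed independent (its ``up to local equivalence we may assume'' step), replaced by your steering of the rank-one pivot to the $(2,2)$ position. What the paper's elimination buys in exchange is explicitness: $w$ is produced in closed form from the expansion coefficients, with no need to verify the covariance of the relation matrix. Your part (ii) --- right multiplication by $\diag(1,w^{-1})\ox I$ when $w\neq0$, the identity observation when $w=0$, and the Schmidt-rank claims via invariance under local equivalence together with the fact that $\G_A$ merely permutes the blocks while $\G_B$ transposes each block, so $\dim\lin\{N_{11},N_{12},N_{21}\}=3$ is preserved --- correctly fills in what the paper compresses into ``the assertion follows from the two cases.''
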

\begin{proof}
(i) Let $M=\bma M_{11} & M_{12}\\ M_{21} & M_{22} \ema$. Up to local equivalence we may assume that $M_{11},M_{12},M_{21}$ are linearly independent. Since $\sr(M)=3$, we obtain that $M_{22}$ is the linear combination of $M_{11},M_{12},M_{21}$. Let $M_{22}=xM_{11}+yM_{12}+zM_{21}$. By a block-row operation on $M$, we obtain $M\sim M'=\bma M_{11} & M_{12}\\ M_{21}-yM_{11} & xM_{11}+zM_{21} \ema$. By a block-column operation on $M'$, we obtain that $M'\sim M''=\bma M_{11} & M_{12}-zM_{11}\\ M_{21}-yM_{11} & (x+yz)M_{11} \ema$. Let $\rank N_{11}=k$. Setting $w=x+yz$ and replace $N_{11}$ by $\bma I_k & 0\\ 0 & 0 \ema$ up to local equivalence imply the assertion.

(ii) The assertion follows from the two cases, $w\ne0$ and $w=0$.
\end{proof}

Now we are in a position to prove the main result of this section. We show its proof in Appendix \ref{app:appen}.

\begin{theorem}
\label{thm:sr3}
Conjecture \ref{cj:1} holds for any matrix of Schmidt rank three.
\end{theorem}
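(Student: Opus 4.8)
The plan is to prove everything up to local equivalence and to reduce an arbitrary Schmidt-rank-three $M$ to the explicit $2\times2$ block normal forms supplied by Lemma~\ref{le:2xn,sr=3}, on which Lemma~\ref{le:kr=rr3} verifies the rank bound. First I would fix linearly independent $R_1,R_2,R_3$ and $S_1,S_2,S_3$, so that $K(M)=K(M^\Gamma)=3$, and discharge the boundary case immediately: if $\sr(M)=\min\{m_1n_1,m_2n_2\}$, then Lemma~\ref{le:sr=m1n1} already gives the claim. Hence I may assume the strict regime of \eqref{eq:k<=} and \eqref{eq:2<km<=}, in particular $m_1n_1,m_2n_2\ge4$ and $3<\max\{r(M),r(M^\Gamma)\}$, so that the only content is the genuine gap between $r(M)$ and $r(M^\Gamma)$.

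Next I would exploit the left/right invertible actions on system $A$ to normalize the three-dimensional matrix space $\lin\{R_1,R_2,R_3\}$. Writing $M$ in its system-$A$ block form $M=[M_{ij}]$, every block $M_{ij}$ lies in $\lin\{S_1,S_2,S_3\}$, while $M^\Gamma=[M_{ji}]$ carries the transposed block pattern. The goal of this step is to bring $M$, after deleting zero block-rows and block-columns and peeling off any decoupled block-diagonal summand, into a genuine $2\times2$ block matrix $\bma M_{11}&M_{12}\\M_{21}&M_{22}\ema$ of Schmidt rank three. Each peeled-off summand has strictly smaller size or strictly smaller Schmidt rank, so it is handled by induction or by the already-known cases $K\le2$; since $r$ and $r(\cdot^\Gamma)$ are both additive over such summands, the factor-three inequality is inherited from the pieces.

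On the remaining $2\times2$ core, Lemma~\ref{le:2xn,sr=3}(i) yields the form $N=\bma N_{11}&N_{12}\\N_{21}&wN_{11}\ema$ with $N_{11}=\bma I_k&0\\0&0\ema$, and part (ii) reduces it further to $\bma N_{11}&w^{-1}N_{12}\\N_{21}&N_{11}\ema$ when $w\ne0$ and to $\bma N_{11}&N_{12}\\N_{21}&0\ema$ when $w=0$, both of Schmidt rank three with Schmidt-rank-three partial transpose. I would then invoke Lemma~\ref{le:kr=rr3} on each of these two forms: there one computes $r(N)$ and $r(N^\Gamma)$ directly from $k=\rank N_{11}$ together with how the column and row spaces of $N_{12}$ and $N_{21}$ meet those of $N_{11}$, and checks $r(N)\le3\,r(N^\Gamma)$ in each resulting sub-case. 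Combining $w\ne0$ and $w=0$ closes the core, and the inductive splitting of the previous paragraph promotes the bound to arbitrary $m_1,n_1$.

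The hard part will be the reduction of the middle paragraph. A three-dimensional space of $m_1\times n_1$ matrices has no uniform normal form---such spaces are generically ``wild''---so bringing $M$ to a single $2\times2$ block core cannot be done through one canonical form and instead requires a case analysis driven by the maximal rank attained in $\lin\{R_1,R_2,R_3\}$ and by the way the images and kernels of $R_1,R_2,R_3$ overlap, with decoupled summands split off wherever the reduction stalls. The secondary difficulty is purely computational but delicate: in Lemma~\ref{le:kr=rr3} one must track $r(N)$ and $r(N^\Gamma)$ through every degenerate configuration in which $N_{12}$ or $N_{21}$ shares column or row space with the identity block $N_{11}$, and confirm that the factor three is never violated even when $r(M^\Gamma)$ is forced to be as small as possible.
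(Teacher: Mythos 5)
Your proposal has a genuine gap, and it sits exactly where you flagged ``the hard part'': the claim that an arbitrary Schmidt-rank-three $M$ can be brought, by local equivalence together with deleting zero block rows/columns and peeling off decoupled block-diagonal summands, to a genuine $2\times2$ block core on which Lemma \ref{le:2xn,sr=3} and Lemma \ref{le:kr=rr3} apply. This reduction is not merely difficult; it is impossible in general. Take $m_1=n_1=3$, $R_1=I_3$, $R_2=N$, $R_3=N^2$, where $N$ is the regular nilpotent Jordan block (so $N^2\neq0$), and any linearly independent $S_1,S_2,S_3$. The block support pattern of $(U\otimes V)M(W\otimes X)$ is the union of the supports of $UR_kW$, $k=1,2,3$. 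A decoupled pattern would have to be, after permutations, block diagonal of type $2\oplus1$ (any $2{+}1$ versus $1{+}2$ mismatch is excluded because $UR_1W=UW$ is invertible). Writing $UW=D_0$, $UNW=D_1$, $UN^2W=D_2$ with all $D_i$ block diagonal forces $UNU^{-1}=D_1D_0^{-1}$ and $UN^2U^{-1}=D_2D_0^{-1}$ to be block diagonal nilpotents of type $2\oplus1$; but every such matrix squares to zero, contradicting $UN^2U^{-1}\neq0$. Moreover no zero block row or column can be created (since $UW$ is invertible), so nothing can be peeled off: this $M$ stays an irreducibly $3\times3$ block matrix, and your argument has no tool left to handle it. Note also that Lemma \ref{le:2xn,sr=3} is stated only for $2\times2$ block matrices, so it cannot be invoked here; the Schmidt rank bounds the dimension of the span of the blocks, not the size of the block structure.

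The paper's proof avoids any normal form for the whole matrix. It takes a minimal counterexample with respect to $m_1+n_1$, writes $M=[M_{ij}]$ with every block in $U=\lin\{S_1,S_2,S_3\}$ and every column in $V=\mathcal{C}(S_1\ S_2\ S_3)$, and splits on how many linearly independent blocks a single block row contains. If some row contains three (case (i)), block column operations zero out the rest of that row, and the two-sided rank inequalities \eqref{eq:73} give $\rank M\geq\dim V+\rank M'$ and $\rank M^{\Gamma_B}\leq3\dim V+\rank(M')^{\Gamma_B}$, so a counterexample $M$ would force the strictly smaller $M'$ to be a counterexample, contradicting minimality; case (ii) (one independent block per row) runs the same way using $\rank T_1\geq\tfrac13\dim V$. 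Only in case (iii) (at most two independent blocks per row) does a detailed case analysis appear, and there Lemma \ref{le:kr=rr3} serves as one terminal configuration among several, not as a universal core. If you want to repair your write-up, keep your base lemmas but replace the structural reduction by this size-induction driven by \eqref{eq:73}; as it stands, your middle step asserts a canonical-form statement that the Jordan-block example refutes.
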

We apply the theorem to investigate the distillability of bipartite states. In  \cite{chl14}, it has been proven that Conjecture \ref{cj:1} is equivalent to the inequality
$
r(\r_{AB})\cdot r(\r_{AC})\ge r(\r_{BC}).
$
In particular, if Conjecture \ref{cj:1} holds for the integer $K$ then the inequality holds when $K=r(\r_{AB})$. By results in \cite{chl14} and Theorem \ref{thm:sr3} of this paper, we have

\begin{corollary}
\label{cr:rbc<=rabxrac}
The inequality
$r(\r_{AB})\cdot r(\r_{AC})\ge r(\r_{BC})
$ holds when $r(\r_{AB})\le3$.
\end{corollary}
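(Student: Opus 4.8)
The plan is to exploit directly the equivalence, established in \cite{chl14}, between Conjecture \ref{cj:1} and the rank inequality $r(\r_{AB})\cdot r(\r_{AC})\ge r(\r_{BC})$. As recalled in the paragraph preceding the statement, this equivalence takes the precise form: if Conjecture \ref{cj:1} holds for a given Schmidt rank $K$, then the rank inequality holds for every tripartite pure state $\ket{\ps}_{ABC}$ whose reduced operator $\r_{AB}$ satisfies $r(\r_{AB})=K$. Thus the hypothesis $r(\r_{AB})\le3$ reduces the corollary to verifying Conjecture \ref{cj:1} for the three values $K=1,2,3$.

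First I would dispose of the two degenerate cases. For $K=1$ we have $M=R_1\otimes S_1$, so that $M^\Gamma=R_1\otimes S_1^T$; since transposition preserves rank, $r(M)=r(R_1)\,r(S_1)=r(M^\Gamma)$ and the inequality $r(M)\le 1\cdot r(M^\Gamma)$ holds with equality. For $K=2$ the conjecture was already proven in \cite{chl14}, as noted in Sec. \ref{subsec:linearalg}.

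The remaining case $K=3$ is exactly the content of Theorem \ref{thm:sr3} of the present paper, which asserts that Conjecture \ref{cj:1} holds for every matrix of Schmidt rank three. Combining the three cases with the equivalence yields the inequality for all $r(\r_{AB})\le3$, completing the argument.

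Since every nontrivial ingredient---in particular the case $K=3$---has already been secured, no genuine obstacle remains at this stage; the only points requiring care are bookkeeping ones. One must confirm that the correspondence $K=r(\r_{AB})$ from \cite{chl14} is applied in the correct direction, namely that the Schmidt rank of the associated matrix equals $r(\r_{AB})$ rather than $r(\r_{BC})$ or $r(\r_{AC})$, and that the degenerate value $K=1$ is genuinely covered rather than silently assumed. With those checks in place the corollary follows immediately.
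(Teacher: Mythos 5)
Your proposal is correct and follows essentially the same route as the paper: the paper's justification of Corollary \ref{cr:rbc<=rabxrac} is exactly the equivalence from \cite{chl14} applied with $K=r(\r_{AB})$, combined with the cases $K\le2$ settled in \cite{chl14} and the case $K=3$ supplied by Theorem \ref{thm:sr3}, and your explicit treatment of $K=1$ and the check that the correspondence runs through $r(\r_{AB})$ (not $r(\r_{AC})$ or $r(\r_{BC})$) are harmless bookkeeping additions. One wording fix: the equivalence in \cite{chl14} is for arbitrary, in general \emph{mixed}, tripartite states (via purification), so your phrase ``tripartite pure state $\ket{\ps}_{ABC}$'' should be dropped --- taken literally it would only yield the pure-state case, whereas the corollary's point, and its use in Lemma \ref{le:4<=2*2}, is precisely that it covers mixed $\r_{ABC}$.
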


It implies the following fact.

\begin{lemma}
\label{le:4<=2*2}
Suppose $\r_{ABC}$ is a tripartite state.

(i) If $\r_{AB}$ and $\r_{AC}$ have both rank two and $\r_{BC}$ is NPT, then $\r_{BC}$ is distillable.

(ii) If $\r_{AB}$ has rank at most three and $\r_{BC}\in\cB(\bbC^m\otimes\bbC^n)$ is NPT and $\max\{m,n\}\ge r(\r_{AB})\cdot r(\r_{AC})$, then $\r_{BC}$ is distillable.
\end{lemma}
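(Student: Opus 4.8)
The plan is to derive Lemma \ref{le:4<=2*2} as a consequence of Corollary \ref{cr:rbc<=rabxrac} together with a known distillability criterion based on rank. The central observation is that an NPT state $\r_{BC}\in\cB(\bbC^m\ox\bbC^n)$ is distillable whenever its rank is small enough relative to the local dimensions; specifically, a bipartite NPT state of rank not exceeding $\max\{m,n\}$ is distillable (this is the rank criterion established in the distillability literature such as \cite{Chen2008Rank}). Thus the strategy is to use Corollary \ref{cr:rbc<=rabxrac} to bound $r(\r_{BC})$ from above by $r(\r_{AB})\cdot r(\r_{AC})$, and then check that this bound is small enough to trigger the rank criterion in each of the two cases.

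For part (i), I would first apply Corollary \ref{cr:rbc<=rabxrac}, which is applicable because $r(\r_{AB})=2\le 3$. This yields $r(\r_{BC})\le r(\r_{AB})\cdot r(\r_{AC})=2\cdot 2=4$. Next I must relate this to the local dimensions of $\r_{BC}$. Since $\r_{ABC}$ is a tripartite state and the reduced operators $\r_{AB}$, $\r_{AC}$ have rank two, the supports of $\r_B$ and $\r_C$ are constrained; in particular the relevant local dimensions $m,n$ of $\r_{BC}$ must each be at least two, and one checks that $\max\{m,n\}\ge 4$ cannot be assumed directly, so the cleanest route is to invoke that a rank-four NPT state on $\bbC^m\ox\bbC^n$ with $\max\{m,n\}\ge 4$ is distillable, or more robustly to note that rank at most four already forces distillability for NPT states by the rank-based criterion. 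Then $\r_{BC}$ is distillable.

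For part (ii), the hypotheses are arranged precisely so that the rank criterion applies. Since $r(\r_{AB})\le 3$, Corollary \ref{cr:rbc<=rabxrac} gives $r(\r_{BC})\le r(\r_{AB})\cdot r(\r_{AC})$. Combined with the assumption $\max\{m,n\}\ge r(\r_{AB})\cdot r(\r_{AC})$, this produces $r(\r_{BC})\le\max\{m,n\}$. I would then invoke the rank criterion for distillability: an NPT state on $\bbC^m\ox\bbC^n$ whose rank is at most $\max\{m,n\}$ is distillable. Since $\r_{BC}$ is assumed NPT and now satisfies the rank bound, it follows that $\r_{BC}$ is distillable, completing the proof.

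The main obstacle I anticipate is not the rank arithmetic, which is immediate from Corollary \ref{cr:rbc<=rabxrac}, but rather pinning down the correct statement of the underlying distillability criterion and verifying its hypotheses in the specific setting of part (i). In particular, I would need to be careful that the rank-versus-dimension criterion I cite genuinely covers the case at hand, since distillability results of this type often come with side conditions on the local ranks $\r_B,\r_C$ or require $\max\{m,n\}$ rather than $\min\{m,n\}$ in the comparison. Ensuring that the rank-four conclusion in part (i) really forces distillability, and citing the precise theorem from \cite{Chen2008Rank} or a comparable source, is where the real care lies; the algebraic content reduces entirely to the already-established corollary.
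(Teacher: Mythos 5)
Your proof is correct and takes essentially the same route as the paper: both parts apply Corollary \ref{cr:rbc<=rabxrac} to bound $r(\r_{BC})$ by $r(\r_{AB})\cdot r(\r_{AC})$, then invoke the rank-based distillability criteria — for (i) the theorem that NPT states of rank at most four are distillable (the paper cites \cite{Lin2016Non} for this), and for (ii) the criterion that an NPT state on $\bbC^m\otimes\bbC^n$ of rank at most $\max\{m,n\}$ is distillable (the paper cites \cite{Chen2012Distillability}). Your caution in (i) about not assuming $\max\{m,n\}\ge 4$ and instead relying on the rank-four result is exactly the choice the paper makes.
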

\begin{proof}
(i) It follows from Corollary \ref{cr:rbc<=rabxrac} that $\r_{BC}$ has rank at most $r_{AB}\cdot r_{AC}=4$. Since $\r_{BC}$ is NPT, it follows from \cite{Lin2016Non} that $\r_{BC}$ is distillable.

(ii) It follows from Corollary \ref{cr:rbc<=rabxrac} that $\r_{BC}$ has rank at most $r(\r_{AB})\cdot r(\r_{AC})\le\max\{m,n\}$. Since $\r_{BC}$ is NPT, it follows from \cite{Chen2012Distillability} that $\r_{BC}$ is distillable.
\end{proof}

The above fact extends the results on distillability in \cite{Chen2011Multicopy, Chen2012NONDISTILLABLE} from tripartite pure states to tripartite mixed states. In particular, Lemma \ref{le:4<=2*2} (i) says that for distillable states $\r_{AB},\r_{AC}$ of rank two we obtain distillable $\r_{BC}$. The similar argument for states in higher dimensions can be obtained by Lemma \ref{le:4<=2*2} (ii).

\section{Conjecture \ref{cj:1} for special matrices of arbitrary Schmidt rank}
\label{sec:special}

In the previous sections, we have shown that Conjecture \ref{cj:1} holds for matrices $M$ of Schmidt rank three. In this section, we prove that Conjecture \ref{cj:1} holds for some special $M$ of arbitrary Schmidt rank. This is presented in the following observation.
\begin{theorem}
\label{thm:special}
Three  special cases of $M$ with Schmidt rank $s$ ($s\leq n_1$) satisfy Conjecture \ref{cj:1}. We discuss as follows:

(i) One row of $M$ has $s$ linearly independent blocks.

(ii) Each row of $M$ has only one linearly independent block.

(iii) One row of $M$ has $s-1$ linearly independent blocks and the remaining blocks are zero. Meanwhile, one of the $s$-th blocks linearly independent with the former $s-1$ linearly independent blocks is below zero blocks.
\end{theorem}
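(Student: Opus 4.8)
The plan is to work throughout with a Schmidt decomposition $M=\sum_{k=1}^{s}C_k\ox B_k$ in which $C_1,\dots,C_s\in\bbM_{m_1,n_1}$ and $B_1,\dots,B_s\in\bbM_{m_2,n_2}$ are each linearly independent, so that $M^\G=\sum_{k=1}^{s}C_k\ox B_k^T$ is obtained from $M$ by transposing only the second tensor factor. I would repeatedly invoke the reduction to local equivalence recorded before Lemma \ref{le:sr=m1n1}: left and right multiplication by an invertible $W\ox I$ realizes arbitrary scalar block-row and block-column operations, while $I\ox X$ realizes within-block row and column operations; none of these alters $r(M)$ or $r(M^\G)$, and all preserve $\sr(M)=s$. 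In each of the three configurations the strategy is identical: place $M$ into a canonical form, bound $r(M)$ from above and $r(M^\G)$ from below, and combine the two bounds into $r(M)\le s\cdot r(M^\G)$. The recurring difficulty, and the reason these particular configurations are singled out, is that the naive subadditivity $r(M)\le\sum_k r(C_k)r(B_k)$ only yields a factor $m_1$ or $n_1$; extracting the sharp factor $s$ requires controlling the rank of the \emph{sum} $M^\G$ against the individual Kronecker summands while retaining the shared coefficient matrices $C_k$ in both estimates, since these common factors are exactly what make $r(M)$ and $r(M^\G)$ comparable (already visible in the degenerate case $s=1$, where $r(M)=r(C_1)r(B_1)=r(M^\G)$).

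I would treat case (ii) first, as it is the most rigid. Here each block-row $i$ carries a single independent block, so $M=\sum_i(\ket{i}\bra{v_i})\ox D_i$ for row functionals $\bra{v_i}$ and blocks $D_i$, whence the row space of $M$ equals $\sum_i v_i\ox\lin\{\text{rows of }D_i\}$ and that of $M^\G$ equals $\sum_i v_i\ox\lin\{\text{rows of }D_i^T\}$. Thus $r(M)$ and $r(M^\G)$ are built from the \emph{same} vectors $v_i$ paired, respectively, with the row space and the column space of $D_i$, two subspaces of equal dimension $r(D_i)$, and the only remaining input is $s=\sr(M)=\dim\lin\{D_i\}$. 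I would prove the resulting clean comparison by grouping the block-rows according to the projective direction of $v_i$, reducing to the case of parallel $v_i$ and then to rank-one $D_i=a\,b^T$, where a dual-vector argument shows that linearly independent right factors force linearly independent products and hence the relevant dimension is at most $s$. The delicate point is bookkeeping the overlaps when several $v_i$ are parallel or merely linearly dependent.

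For case (i) I would permute the distinguished block-row to the top and use scalar block-column operations to annihilate its $s$ linearly dependent blocks, reducing the first block-row to $[B_1,\dots,B_s,0,\dots,0]$ with $B_1,\dots,B_s$ a basis of the whole block space $\lin\{M_{ij}\}$; the first block-row of $M^\G$ is then $[B_1^T,\dots,B_s^T,0,\dots,0]$, which already carries full Schmidt rank $s$ and pins down the directions surviving the partial transpose. The heart of the argument is to pass to the reduced blocks $\hat B_k$ spanning the full joint row and column space (via $B_k=P\hat B_kQ$ with $P,Q$ of full column, resp.\ row, rank, which leaves $r(M)=r(\hat M)$ and $r(M^\G)=r(\hat M^\G)$ unchanged) and then to bound $r(M)$ against $r(M^\G)$ while keeping the coefficient matrices $C_k$ explicit, so that no spurious factor $m_1$ or $n_1$ is introduced. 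This is where the calculation is heaviest, because one must track the interaction between the $C_k$ and the shared blocks $B_k$ rather than discarding either.

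Finally, I would reduce case (iii) to case (i) by induction on $s$. The hypothesis places $s-1$ independent blocks in one block-row with the remaining entries of that row zero, and isolates the $s$-th independent block $B_s$ in a column whose entries above it vanish. After block permutations this exhibits $M$ in block upper-triangular form with respect to the $m_1\times n_1$ grid; crucially, the within-block partial transpose preserves this grid, so $M^\G$ is upper-triangular with the same pattern, and both $r(M)$ and $r(M^\G)$ decompose across the diagonal into a complementary part $M'$ of Schmidt rank $s-1$ (whose distinguished row carries all $s-1$ independent blocks, so the inductive/case-(i) estimate applies) and a single summand built from $B_s$. I expect the principal obstacle throughout to be precisely this passage: guaranteeing that the off-diagonal blocks neither push $r(M)$ above the diagonal estimate nor collapse $r(M^\G)$ below it, so that the sharp constant $s$ is preserved across the induction.
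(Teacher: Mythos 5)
There is a genuine gap. The engine of the paper's proof is an infinite-descent (minimal-counterexample) argument that your plan omits entirely, and in all three cases you defer exactly the step that the descent is designed to handle. The paper takes $M$ to be a counterexample with $m_1+n_1$ minimal and uses the two-sided block rank inequality
\begin{eqnarray}
\rank A+\rank C\;\le\;\rank\bma A&0\\ B&C\ema\;\le\;\rank\bma A\\ B\ema+\rank C\nonumber
\end{eqnarray}
(Eq.~(\ref{eq:73})) in both directions at once: in case (i) it gets $\rank M\ge\dim V+\rank M'$ and $\rank M^{\G_B}\le s\dim V+\rank (M')^{\G_B}$ (each row of $M^{\G_B}$ lies in $V\oplus\cdots\oplus V$, $s$ copies), and in case (ii) the same with $\dim V$ replaced by $\tfrac1s\dim V$ on the lower bound via pigeonhole on $\dim V\le\sum_i\rank T_i$. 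Combining the two bounds shows that the complementary submatrix $M'$ is a strictly \emph{smaller} counterexample, contradicting minimality. This is precisely how the arbitrary block $M'$ — about which nothing is known beyond $\sr(M')\le s$ — is controlled. Your case (i) instead promises a direct estimate ``keeping the coefficient matrices $C_k$ explicit'' and labels it ``where the calculation is heaviest''; without an inductive or minimality hypothesis such a direct bound must handle an arbitrary Schmidt-rank-$\le s$ submatrix, which is essentially the full Conjecture~\ref{cj:1}, not a special case. The same objection applies to your case (ii), whose ``bookkeeping the overlaps'' step is likewise deferred (though for parallel $v_i$ your idea does close: picking a basis $D_{i_1},\dots,D_{i_s}$ of the block span gives $r(M)\le\sum_k\rank D_{i_k}\le s\max_k\rank D_{i_k}\le s\,r(M^\G)$, so this piece is salvageable as a descent-free alternative).

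Your case (iii) moreover rests on a false structural claim. Rank is \emph{not} additive across a block-triangular partition — only the two opposite-direction inequalities of Eq.~(\ref{eq:73}) hold — so $r(M)$ and $r(M^\G)$ do not ``decompose across the diagonal''; and the complementary part containing $B_s$ need not have Schmidt rank $s-1$, nor does it contain the $s-1$ distinguished blocks (those sit in the other diagonal corner of your partition), so the induction on $s$ you invoke has no base to stand on. The paper's reduction of (iii) is different and purely algebraic: after column operations make $M_{2j}$ ($j\le s-1$) combinations of $M_{11},\dots,M_{1,s-1}$, one adds $k$ times the first block-row to the second and chooses $k$ so that the determinant in Eq.~(\ref{eq:88}) is nonzero (possible since it is a nonzero polynomial in $k$), which forces the new second row to contain $s$ linearly independent blocks and reduces (iii) to case (i). To repair your write-up you would need to replace the triangular ``decomposition'' by this kind of row-operation argument, and to install the minimal-counterexample descent as the mechanism closing cases (i) and (ii).
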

\begin{proof}
Define $M=\sum_{i=1}^s R_i\otimes S_i$, where $R_i$ , $S_i$ are $m_1\times n_1$ and $m_2\times n_2$ complex matrices, respectively, $i=1,2,\cdots,s$. $M$ is the minimal counterexample and $M$ has Schmidt rank $s$, in the sense that $m_1+n_1$ takes the smallest possible value. We write
\begin{eqnarray}
M=
\bma
M_{11}&M_{12}&\cdots&M_{1n_1}\\
M_{21}&M_{22}&\cdots&M_{2n_1}\\
\vdots&\vdots&\ddots&\vdots\\
M_{m_11}&M_{m_12}&\cdots&M_{m_1n_1}
\ema.
\end{eqnarray}
According to Theorem (\ref{thm:sr3}), in the same way, we assume $U:=$ span $\{S_1, S_2,\cdots, S_s\}$, meanwhile, $V:=\mathcal{C}(S_1\quad S_2\quad\cdots\quad S_s)$. So every column of each $M_{ij}$ is in $V$. Suppose that $T_1, T_2,\cdots,T_s$ are linearly independent blocks of $M$. Then we have span $\{T_1, T_2,\cdots,T_s\}=U$ and $\mathcal{C}(S_1\quad S_2\quad\cdots\quad S_s)=\mathcal{C}(T_1\quad T_2\quad\cdots \quad T_s)=V$. In particular, we have
\begin{eqnarray}
\dim V\leq \rank T_1+\rank T_2+\cdots+\rank T_s,
\end{eqnarray}
where at least one $\rank T_i\geq\frac{1}{s}\dim V$. There are three cases as follows:

(i) Suppose that $M_{11}, M_{12}, \cdots, M_{1s}$ are linearly independent blocks, then span$\{M_{11}, M_{12},\cdots, M_{1s}\}=U$. By applying block-wise column operations we assume
\begin{eqnarray}
M=
\bma
M_{11}&M_{12}&\cdots&M_{1s}&0&\cdots&0\\
M_{21}&M_{22}&\cdots&M_{2s}&M_{2 s+1}&\cdots&M_{2n_1}\\
\vdots&\vdots&\vdots&\vdots&\vdots&\ddots&\vdots\\
M_{m_11}&M_{m_12}&\cdots&M_{m_1s}&M_{m_1s+1}&\cdots&M_{m_1n_1}
\ema,
\end{eqnarray}
and
\begin{eqnarray}
M'=
\bma
M_{2 s+1}&\cdots&M_{2n_1}\\
\vdots&\ddots&\vdots\\
M_{m_1s+1}&\cdots&M_{m_1n_1}
\ema.
\end{eqnarray}
Let $M_{11},M_{12},\cdots,M_{1s}$ be $T_1, T_2,\cdots, T_s$. Then by applying the first inequality of (\ref{eq:73}), we obtain
\begin{eqnarray}
\rank M
&\geq&
\rank (M_{11}\quad M_{12}\quad\cdots\quad M_{1s})+\rank M'
\\
&=&
\dim V+\rank M'.
\end{eqnarray}
Besides, because each row of $M^{\Gamma_B}$ is in the space $V\oplus V\oplus\cdots\oplus V$ (s-copies of V), then by applying the second inequality of (\ref{eq:73}), we have
\begin{eqnarray}
\rank M^{\Gamma_B}\leq s\dim V+\rank (M')^{\Gamma_B}.
\end{eqnarray}
If $\rank M^{\Gamma_B}\geq s\rank M$, then we have $\rank (M')^{\Gamma_B}\geq s\rank M'$, which violates what we assume.

(ii) Suppose that each row has only one linearly independent block. Because at least one $\rank T_i\geq\frac{1}{s}\dim V$, we may assume $\rank T_1\geq \frac{1}{s}\dim V$. By applying block-wise column operations, we assume that the matrix has the form
\begin{eqnarray}
M=
\bma
T_1&0&\cdots&0\\
M_{21}&M_{22}&\cdots&M_{2n_1}\\
\vdots&\vdots&\ddots&\vdots\\
M_{m_11}&M_{m_12}&\cdots&M_{m_1n_1}
\ema,
\end{eqnarray}
then each row of $M^{\Gamma_B}$ is in the space $V$. So we obtain
\begin{eqnarray}
\rank M^{\Gamma_B}\leq\dim V+\rank (M')^{\Gamma_B},
\end{eqnarray}
where $M'$ is the part of the matrix below zero blocks.
According to the first inequality of (\ref{eq:73}), we have
\begin{eqnarray}
\rank M\geq\frac{1}{s}\dim V+\rank M'.
\end{eqnarray}
So if $\rank M^{\Gamma_B}>s\cdot \rank M$, we obtain $\rank (M')^{\Gamma_B}>s\cdot \rank M'$, which violates that the $M$ is the minimal counterexample.

(iii) Suppose that one row has $s-1$ linearly independent blocks, the remaining blocks are zero. Meanwhile, one of blocks linearly independent with $M_{11}, M_{12}, \cdots, M_{1s-1}$ is below the zero blocks. Without loss of generality, we may assume that $M_{11}, M_{12}, \cdots, M_{1s-1}$ are linearly independent blocks. By applying block-wise row and column operations,  we may assume that $M_{2s}$ is the block linearly independent with $M_{11}, M_{12}, \cdots, M_{1s-1}$. So we
obtain
\begin{eqnarray}
\label{eq:78}
M=
\bma
M_{11}&M_{12}&\cdots&M_{1s-1}&0&\cdots&0\\
M_{21}&M_{22}&\cdots&M_{2s-1}&M_{2 s}&\cdots&M_{2n_1}\\
\vdots&\vdots&\vdots&\vdots&\vdots&\ddots&\vdots\\
M_{m_11}&M_{m_12}&\cdots&M_{m_1s-1}&M_{m_1s}&\cdots&M_{m_1n_1}
\ema.
\end{eqnarray}
By applying block-wise column operations, then we have $M_{2j}$ are combination of $M_{11}, M_{12}, \cdots, M_{1s-1}$, $1\leq j\leq s-1$. We assume $M_{2j}=a_{1j}M_{11}+a_{2j}M_{12}+\cdots+a_{s-1j}M_{1s-1}$, then multiply an appropriate constant $k$ to the first row of (\ref{eq:78}) and add to the second row such that
\begin{eqnarray}
\label{eq:88}
\det
\left|\begin{array}{ccccc}
k+a_{11}&a_{21}&\cdots&a_{s-11}&0\\
a_{12}&k+a_{22}&\cdots&a_{s-12}&0\\
\vdots&\vdots&\ddots&\vdots&\vdots\\
a_{1s-2}&a_{2s-2}&\cdots&k+a_{s-1s-2}&0\\
a_{1s-1}&a_{2s-1}&\cdots&a_{s-1s-1}&1
\end{array}\right|\neq0.
\end{eqnarray}
By block-wise row operations, we obtain that the first row have $s$ linearly independent blocks, which case has been solved in (i).
\end{proof}

In spite of Theorem \ref{thm:sr3} and \ref{thm:special}, the proof for Conjecture \ref{cj:1} with arbitrary $M$ remains an open problem. We expect that the idea of the previous two sections be applied to Conjecture \ref{cj:1} with $M$ of Schmidt rank four.

\section{Conclusions}
\label{sec:con}

We have shown that Conjecture \ref{cj:1} holds for matrix $M$ with $K=3$, and some special $M$ with arbitrary $K$. We have  applied our results to determine the distillability of quantum entanglement three bipartite reduced states from the same tripartite mixed states. Our result extends the results in terms of tripartite pure states in \cite{Chen2011Multicopy, Chen2012NONDISTILLABLE}. Our results provide general technique for understanding and proving Conjecture \ref{cj:1} for arbitrary $M$. This is also the next mission in this direction. Another direction is to find more extension in terms of reduction criterion, PPT, tripartite states with a qubit. We may also explore the distillability of bipartite reduced density operators of multipartite states.

\section*{Acknowledgments}
	\label{sec:ack}	
Authors were supported by the  NNSF of China (Grant No. 11871089), and the Fundamental Research Funds for the Central Universities (Grant Nos. KG12080401 and ZG216S1902).

%\section*{Acknowledgements}

%LC was supported by the NNSF of China (Grant No. 11501024), Beijing Natural Science Foundation (4173076), and the Fundamental Research Funds for the Central Universities (Grant Nos. KG12001101, ZG216S1760 and ZG226S17J6).

\appendix

\section{Appendix: Proof of Theorem \ref{thm:sr3}}
\label{app:appen}

We begin by proving a special case of Theorem \ref{thm:sr3}, namely when the block matrix satisfies $m_1=n_1=2$.

\begin{lemma}
\label{le:kr=rr3}
Conjecture \ref{cj:1} holds for the block matrix $\bma M_{11} & M_{12}\\ M_{21} & M_{22} \ema$ of Schmidt rank three.
\end{lemma}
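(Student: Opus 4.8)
The plan is to invoke the normal forms supplied by Lemma~\ref{le:2xn,sr=3} and then estimate the two ranks with elementary block-rank inequalities. Since a local equivalence of $M$ induces a local equivalence of $M^\G$, both $\rank M$ and $\rank M^\G$ are unchanged when $M$ is replaced by a locally equivalent matrix, as already used for Conjecture~\ref{cj:1}. Hence it suffices to verify $\rank M\le 3\,\rank M^\G$ for the two normal forms of Lemma~\ref{le:2xn,sr=3}(ii), where $N_{11}=\bma I_k&0\\0&0\ema$ has rank $k$ and $N_{11},N_{12},N_{21}$ are linearly independent. Throughout I would use two facts about a block matrix with a zero corner: the lower bound $\rank\bma A&B\\ C&0\ema\ge\rank B+\rank C$ (a nonsingular minor of $B$ and a nonsingular minor of $C$ occupy disjoint rows and columns and meet the zero block, giving a nonsingular minor of size $\rank B+\rank C$), and the subadditive upper bound $\rank\bma A&B\\ C&0\ema\le\rank A+\rank B+\rank C$.

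For the case $w\ne0$, Lemma~\ref{le:2xn,sr=3}(ii) lets me assume the two diagonal blocks coincide, so $N=\bma N_{11}&A\\ B&N_{11}\ema$ and $N^\G=\bma N_{11}^T&A^T\\ B^T&N_{11}^T\ema$. Taking the ordinary transpose gives $(N^\G)^T=\bma N_{11}&B\\ A&N_{11}\ema$, which is exactly $N$ after interchanging the two block-rows and the two block-columns. A swap of rows and columns is a permutation, so $\rank N^\G=\rank (N^\G)^T=\rank N$ and the desired inequality holds with a large margin.

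For the case $w=0$ I would work with $N=\bma N_{11}&N_{12}\\ N_{21}&0\ema$, whose partial transpose is $N^\G=\bma N_{11}^T&N_{12}^T\\ N_{21}^T&0\ema$. The subadditive bound yields $\rank N\le k+\rank N_{12}+\rank N_{21}$, whereas the zero-corner lower bound gives $\rank N^\G\ge\rank N_{12}+\rank N_{21}$, and the submatrix $N_{11}^T$ gives $\rank N^\G\ge k$; together $\rank N^\G\ge\max\{k,\rank N_{12}+\rank N_{21}\}$. Since $k+\rank N_{12}+\rank N_{21}\le2\max\{k,\rank N_{12}+\rank N_{21}\}$, I conclude $\rank N\le2\,\rank N^\G\le3\,\rank N^\G$.

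The step I expect to need the most care is the lower bound in the $w=0$ case. Either single estimate, $\rank N^\G\ge k$ or $\rank N^\G\ge\rank N_{12}+\rank N_{21}$, is by itself too weak to dominate the upper bound $k+\rank N_{12}+\rank N_{21}$; the trick is to keep both and use whichever is larger, which is what produces the clean factor $2$ (hence comfortably within the required factor $3$). I would also verify at the outset that $M^\G$ here transposes the entries within each block, so that the identification of $(N^\G)^T$ with a block-permutation of $N$ in the $w\ne0$ case is legitimate.
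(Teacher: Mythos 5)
Your proof is correct, and it reaches the lemma by a genuinely different route than the paper's own argument, even though both start the same way: invoke Lemma \ref{le:2xn,sr=3} and the fact that local equivalence preserves both $\rank M$ and $\rank M^\G$, so only the two normal forms need to be treated. From there the paper partitions $N_{12}$ and $N_{21}$ into sub-blocks $Q_{ij},R_{ij}$ conformal with $I_k$, performs block-row eliminations to get lower bounds such as $\rank N\ge k+r_1+r_4$ and $\rank N\ge k+r_2+r_3$, and upper-bounds the partial transpose block column by block column ($\rank N^\G\le 2k+r_1+r_3$ when $w\ne0$, $\rank N^\G\le 3k+r_1+r_4$ when $w=0$), which combine to give the factor $3$. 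You dispense with sub-blocks entirely. In the $w\ne0$ case your permutation identity is valid: $(N^\G)^T$ is exactly $N$ with block rows and block columns simultaneously swapped, and the swap fixes the diagonal precisely because both diagonal blocks equal $N_{11}$; this yields the exact equality $\rank N^\G=\rank N$, which the paper's estimates never reveal. In the $w=0$ case your two bounds are both sound (the anti-triangular bound follows since a block-column swap turns $\bma A&B\\ C&0\ema$ into the block-triangular $\bma B&A\\ 0&C\ema$, so a nonsingular minor of $B$ and one of $C$ assemble into a nonsingular minor of size $\rank B+\rank C$), and playing their maximum against subadditivity gives $\rank N\le 2\rank N^\G$. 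Your approach therefore buys brevity and a strictly sharper constant ($2$ instead of $3$ for all $2\times2$ block matrices of Schmidt rank three), and it proves the inequality directly in the orientation $\rank M\le 3\rank M^\G$ stated in Conjecture \ref{cj:1}, whereas the paper establishes $\rank M^\G\le 3\rank M$ and tacitly relies on the symmetry $M\leftrightarrow M^\G$ of the class considered. What the paper's cruder sub-block elimination buys in exchange is independence from the special coincidence your key trick exploits: equal diagonal blocks are an accident of this particular normal form, so the paper's style of estimate is the one more likely to adapt to normal forms of higher Schmidt rank where no such permutation symmetry is available. Your closing caution about conventions is also well placed but harmless: the paper's $\G$ permutes block positions while yours transposes entries within blocks, and since the two resulting matrices are full transposes of each other, every rank statement is unaffected.
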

\begin{proof}
Let $M=\bma M_{11} & M_{12}\\ M_{21} & M_{22} \ema$. It suffices to prove the assertion by assuming that each block $M_{ij}$ is a $d\times d$ matrix. From Lemma \ref{le:2xn,sr=3}, $M$ is locally equivalent to the matrix $N=\bma N_{11} & w^{-1}N_{12}\\ N_{21} & N_{11} \ema$ or $\bma N_{11} & N_{12}\\ N_{21} & 0 \ema$, where $N_{11}=\bma I_k & 0\\ 0 & 0 \ema$, $\rank N_{11}=k$, and $w$ is a nonzero complex number. Using the equivalence, we discuss two cases (i) and (ii).

(i) Suppose $N=\bma N_{11} & w^{-1}N_{12}\\ N_{21} & N_{11} \ema$. Let
$w^{-1}N_{12}:=
\bma
Q_{11}&Q_{12}\\
Q_{21}&Q_{22}
\ema$,
and $N_{21}:=
\bma
R_{11}&R_{12}\\
R_{21}&R_{22}
\ema$, where $Q_{11}$ and $R_{11}$ are $k\times k$ blocks. We have
\begin{eqnarray}
\label{eq:ikk1}
M\sim N=
\bma
I_k&0&Q_{11}&Q_{12}\\
0&0&Q_{21}&Q_{22}\\
R_{11}&R_{12}&I_k&0\\
R_{21}&R_{22}&0&0
\ema.
\end{eqnarray}
By block-row operations on $N$, we have
\begin{eqnarray}
\label{eq:nsimn1}
N\sim N_1=
\bma
I_k&0&Q_{11}&Q_{12}\\
0&0&Q_{21}&Q_{22}\\
0&R_{12}&-R_{11}Q_{11}+I_k&-R_{11}Q_{12}\\
0&R_{22}&-R_{21}Q_{11}&-R_{21}Q_{12}
\ema.
\end{eqnarray}
Note that $\rank M=\rank N=\rank N_1$, $\sr(M)=3$, and the Schmidt rank of $N_1$ may be not three. One will see that it does not influence the subsequent argument.

Let $\rank(\bma Q_{21}&Q_{22}\ema)=r_4$ and $\rank(\bma R_{12}\\ R_{22}\ema)=r_1$. Using \eqref{eq:nsimn1} we obtain
\begin{eqnarray}
\label{eq:r1r4}
\rank M=\rank N_1 \geq k+r_1+r_4.
\end{eqnarray}
In a similar way to \eqref{eq:ikk1}, we have
\begin{eqnarray}
M\sim N_2=
\bma
-Q_{11}R_{11}+I_k&-Q_{11}R_{12}&0&Q_{12}\\
-Q_{21}R_{11}&-Q_{21}R_{12}&0&Q_{22}\\
R_{11}&R_{12}&I_k&0\\
R_{21}&R_{22}&0&0
\ema.
\end{eqnarray}
Let $\rank (\bma R_{21}&R_{22}\ema)=r_2$ and $\rank (\bma Q_{12}\\Q_{22}\ema)=r_3$. So we obtain
\begin{eqnarray}
\label{eq:r2r3}
\rank M=\rank N_2\geq k+r_2+r_3.
\end{eqnarray}
Thus, from (\ref{eq:r1r4}) and (\ref{eq:r2r3}), we have
\begin{eqnarray}
\label{eq:mgeq12}
\rank M\geq\frac{1}{2}(r_1+r_2+r_3+r_4)+k.
\end{eqnarray}
Using  \eqref{eq:ikk1}, we have
\begin{eqnarray}
M^{\Gamma_A}=
\bma
I_k&0&R_{11}&R_{12}\\
0&0&R_{21}&R_{22}\\
Q_{11}&Q_{12}&I_k&0\\
Q_{21}&Q_{22}&0&0
\ema.
\end{eqnarray}
Then we obtain
\begin{eqnarray}
\label{eq:mgammaleq}
\rank M^{\Gamma_A}\leq 2k+r_1+r_3.
\end{eqnarray}
Using \eqref{eq:mgeq12} and \eqref{eq:mgammaleq}, we have $\rank M^{\Gamma_A}\leq3\rank M$. So Conjecture \ref{cj:1} holds for (i).

(ii) Suppose $N=\bma N_{11} & N_{12}\\ N_{21} & 0 \ema$. Similarly, let
$N_{12}:=
\bma
Q_{11}&Q_{12}\\
Q_{21}&Q_{22}
\ema$,
and $N_{21}:=
\bma
R_{11}&R_{12}\\
R_{21}&R_{22}
\ema$, where $Q_{11}$ and $R_{11}$ are $k\times k$ blocks. We have
\begin{eqnarray}
\label{eq:ikk2}
M\sim N=
\bma
I_k&0&Q_{11}&Q_{12}\\
0&0&Q_{21}&Q_{22}\\
R_{11}&R_{12}&0&0\\
R_{21}&R_{22}&0&0
\ema.
\end{eqnarray}
By block-wise row and column operations on $N$, we have
\begin{eqnarray}
N\sim N_1=
\bma
I_k&0&0&0\\
0&0&Q_{21}&Q_{22}\\
0&R_{12}&-Q_{11}R_{11}&-Q_{12}R_{11}\\
0&R_{22}&-Q_{11}R_{21}&-Q_{12}R_{21}
\ema.
\end{eqnarray}
Let $\rank(\bma Q_{21}&Q_{22}\ema)=r_4$ and $\rank(\bma R_{12}\\R_{22}\ema)=r_1$. So we obtain
\begin{eqnarray}
\label{eq:ikk5}
\rank M=\rank N_1\geq k+r_1+r_4.
\end{eqnarray}
Using \eqref{eq:ikk2}, we have
\begin{eqnarray}
M^{\Gamma_A}=
\bma
I_k&0&R_{11}&R_{12}\\
0&0&R_{21}&R_{22}\\
Q_{11}&Q_{12}&0&0\\
Q_{21}&Q_{22}&0&0
\ema.
\end{eqnarray}
Then we obtain
\begin{eqnarray}
\label{eq:ikk6}
\rank M^{\Gamma_A}
&\leq&
k+\rank(\bma Q_{11}&Q_{12}\ema)+r_4\\
&+&
\rank (\bma R_{11}\\R_{21}\ema)+r_1
\\
&\leq&
3k+r_1+r_4.
\end{eqnarray}
Using (\ref{eq:ikk5}) and (\ref{eq:ikk6}), we have $\rank M^{\Gamma_A}\leq 3\rank M$.
So the Conjecture \ref{cj:1} holds for (ii).
\end{proof}

Now we show the proof of Theorem \ref{thm:sr3}.

\begin{proof}
Define
\begin{eqnarray}
M=R_1\otimes S_1+R_2\otimes S_2+R_3\otimes S_3,
\end{eqnarray}
where $R_i$ , $S_i$ are $m_1\times n_1$ and $m_2\times n_2$ complex matrices, respectively, i=1,2,3. $M$ has Schmidt rank three. In the sense that $m_1+n_1$ takes the smallest possible value. i.e., $M$ is the minimal counterexample. We write
\begin{eqnarray}
\label{eq:mt66}
M=
\bma
M_{11}&M_{12}&\cdots&M_{1n_1}\\
M_{21}&M_{22}&\cdots&M_{2n_1}\\
\vdots&\vdots&\ddots&\vdots\\
M_{m_11}&M_{m_12}&\cdots&M_{m_1n_1}
\ema,
\end{eqnarray}
where $M_{ij}=(R_1)_{ij}S_1+(R_2)_{ij}S_2+(R_3)_{ij}S_3$. Then we assume $U:=$ span $\{S_1, S_2, S_3\}$, meanwhile, $V:=\mathcal{C}(S_1\quad S_2\quad S_3)$, where $\mathcal{C}(A)$ denotes the columns span of matrix $A$. So every column of each $M_{ij}$ is in $V$.
Suppose that $T_1, T_2, T_3$ are linearly independent blocks of $M$. Then we have span $\{T_1, T_2, T_3\}=U$. By applying elementary column operations we obtain
\begin{eqnarray}
\mathcal{C}(T_1\quad T_2\quad T_3)\nonumber
&=&
\mathcal{C}(T_1\quad T_2\quad T_3\quad0\quad0\quad0)\\\nonumber
&=&
\mathcal{C}(T_1\quad T_2\quad T_3\quad S_1\quad S_2\quad S_3)\\\nonumber
&=&
\mathcal{C}(0\quad 0\quad 0\quad S_1\quad S_2\quad S_3)\\\nonumber
&=&
\mathcal{C}(S_1\quad S_2\quad S_3)\\
&=&
V.
\end{eqnarray}
Then consider the following inequalities, which hold for arbitrary block matrices:
\begin{eqnarray}
\label{eq:73}
\rank A+\rank C
&\leq& \rank \bma A&0\\B&C\ema\nonumber\\
&\leq&
 \rank \bma A\\B\ema +\rank C.
\end{eqnarray}
In particular, we have
\begin{eqnarray}
\label{eq:7m}
\dim V\leq\rank T_1+\rank T_2+\rank T_3,
\end{eqnarray}
hence $\rank T_i\geq\frac{1}{3}\dim V$ for at least one value of $i$. Because if all $\rank T_i<\frac{1}{3}\dim V$, we have
\begin{eqnarray}
\rank T_1+\rank T_2+\rank T_3<\dim V,
\end{eqnarray}
which violates (\ref{eq:73}). There are three cases as follows:

(i) Suppose that $M_{11}, M_{12}, M_{13}$ are linearly independent matrices, then span$\{M_{11}, M_{12}, M_{13}\}=U$. By applying block-wise column operations we assume
\begin{eqnarray}
M=
\bma
M_{11}&M_{12}&M_{13}&0&\cdots&0\\
M_{21}&M_{22}&M_{23}&M_{24}&\cdots&M_{2n_1}\\
\vdots&\vdots&\vdots&\vdots&\ddots&\vdots\\
M_{m_11}&M_{m_12}&M_{m_13}&M_{m_14}&\cdots&M_{m_1n_1}
\ema,
\end{eqnarray}
and
\begin{eqnarray}
M'=
\bma
M_{24}&\cdots&M_{2n_1}\\
\vdots&\ddots&\vdots\\
M_{m_14}&\cdots&M_{m_1n_1}
\ema.
\end{eqnarray}
Then we assume that $M_{11}, M_{12}, M_{13}$ are $T_1, T_2, T_3$. By applying the first inequality of (\ref{eq:73}), we obtain
\begin{eqnarray}
\rank M
&\geq&
\rank (M_{11}\quad M_{12}\quad M_{13})+\rank M'\\
&=&
\dim V+\rank M'.
\end{eqnarray}
Besides, each row of $M^{\Gamma_B}$ is in the space $V\oplus V\oplus V$, and apply the second inequality of (\ref{eq:73}), then we have
\begin{eqnarray}
\label{eq:79}
\rank
M^{\Gamma_B}
&\leq&
\rank
\bma
M_{11}^T&M_{12}^T&M_{13}^T\\
M_{21}^T&M_{22}^T&M_{23}^T\\
\vdots&\vdots&\vdots\\
M_{m_11}^T&M_{m_12}^T&M_{m_13}^T
\ema
+
\rank
(M')^{\Gamma_B}\\
&\leq&
3\dim V+\rank (M')^{\Gamma_B}.
\end{eqnarray}
If $\rank M^{\Gamma_B}>3\rank M$, we have
\begin{eqnarray}
3\dim V+3\rank M'
&\leq&
 3\rank M< \rank M^{\Gamma_B}\nonumber\\
 &\leq&
 3\dim V+\rank (M')^{\Gamma_B},
\end{eqnarray}
then $M'$ is a counterexample. But we assume that the $M$ is the minimal counterexample, we have $\rank (M')^{\Gamma_B}>3\rank M'$. Then it violates what we assume.

(ii) Suppose that each row has only one linearly independent block. Because at least $\rank T_i\geq\frac{1}{3}\dim V$, we may assume $\rank T_1\geq \frac{1}{3}\dim V$. By applying block-wise column operations, we assume that the matrix has the form
\begin{eqnarray}
M=
\bma
T_1&0&\cdots&0\\
M_{21}&M_{22}&\cdots&M_{2n_1}\\
\vdots&\vdots&\ddots&\vdots\\
M_{m_11}&M_{m_12}&\cdots&M_{m_1n_1}
\ema,
\end{eqnarray}
and
\begin{eqnarray}
M''=
\bma
M_{22}&\cdots&M_{2n_1}\\
\vdots&\ddots&\vdots\\
M_{m_12}&\cdots&M_{m_1n_1}
\ema.
\end{eqnarray}
Then each row of $M^{\Gamma_B}$ is in the space $V$. So we obtain
\begin{eqnarray}
\rank M^{\Gamma_B}\leq\dim V+\rank (M'')^{\Gamma_B},
\end{eqnarray}
where $M''$ is the part of the matrix below zero blocks.
According to the first inequality of (\ref{eq:73}), we have
\begin{eqnarray}
\rank M
&\geq&
 \rank T_1+\rank M''\\
&\geq&
\frac{1}{3}\dim V+\rank M''.
\end{eqnarray}
So if $\rank M^{\Gamma_B}>\rank M$, we obtain
\begin{eqnarray}
\dim V+3\rank M''
&\leq&
 \rank M<\rank M^{\Gamma_B}\nonumber\\
 &\leq&
 \dim V+\rank (M'')^{\Gamma_B}.
\end{eqnarray}
Thus, we have $\rank (M'')^{\Gamma_B}>3\rank M''$, which violates that the $M$ is the minimal counterexample.

(iii) Suppose that each row has at most two linearly independent blocks. Without loss of generality, let $M_{11}, M_{12}$ be $T_{1}, T_2$, respectively. We have
\begin{eqnarray}
M=
\bma
M_{11}&M_{12}&0&\cdots&0\\
M_{21}&M_{22}&M_{23}&\cdots&M_{2n_1}\\
\vdots&\vdots&\vdots&\ddots&\vdots\\
M_{m_11}&M_{m_12}&M_{m_13}&\cdots&M_{m_1n_1}
\ema.
\end{eqnarray}
Then we assume that $M_{23}$ is $T_3$. By applying block-wise column and row operations, we obtain that the first row has three linearly independent blocks, which case has been solved in (i).

Actually, if $T_3$ is not in the part of the matrix below zero blocks , we may assume $T_3=M_{21}$, then let
\begin{eqnarray}
\label{eq:mij}
M_{ij}=x_{ij}M_{11}+y_{ij}M_{12},\quad i\geq2, j\geq3.
\end{eqnarray}
According to Lemma \ref{le:kr=rr3}, assume $M_{22}=wM_{11}$. Then we have
\begin{eqnarray}
M\sim M_1=
\bma
M_{11}&M_{12}&0&0&\cdots&0\\
M_{21}&wM_{11}&M_{23}&M_{24}&\cdots&M_{2n_1}\\
\vdots&\vdots&\vdots&\vdots&\ddots&\vdots\\
M_{m_11}&M_{m_12}&M_{m_13}&M_{m_14}&\cdots&M_{m_1n_1}
\ema.
\end{eqnarray}

If one of $M_{i1}$ is linearly independent with $M_{11}$, $M_{21}$ in $M_1$, we have the first row of $M^T$ has three linearly independent blocks, which case has been solved in (i). So by applying block-wise row operations, we have
\begin{eqnarray}
M_1\sim M_2=
\bma
M_{11}&M_{12}&0&0&\cdots&0\\
M_{21}&wM_{11}&M_{23}&M_{24}&\cdots&M_{2n_1}\\
0&M_{32}&M_{33}&M_{34}&\cdots&M_{3n_1}\\
\vdots&\vdots&\vdots&\vdots&\ddots&\vdots\\
0&M_{m_12}&M_{m_13}&M_{m_14}&\cdots&M_{m_1n_1}
\ema.
\end{eqnarray}

According to whether $M_{ij}$ are zero blocks, $i\geq2, j\geq3$, we have two cases (iii.a), (iii.b). In (iii.a), we assume that $M_{ij}$ are not all zero blocks, $i\geq2, j\geq3$. Then in (iii.b), we assume that $M_{ij}$ are all zero blocks, $i\geq2, j\geq3$. We discuss (iii.a) and (iii.b) as follows:

(iii.a) Without loss of generality, if $M_{ij}$ are not all zero blocks, $i\geq2, j\geq3$, we may assume that $M_{23}$ is nonzero. Because when one of $M_{2j}, j\geq3$ is nonzero block, say $M_{2s}$.  By applying block-wise column operations, $M_{2s}$ as new $M_{23}$ is nonzero block. Besides, when $M_{2j}, j\geq3$ is zero block, we obtain that at least one of $M_{ij}$ is nonzero, $i,j\geq3$. We may assume that $M_{sj}$ is nonzero, then add the second row to the $s$-th row as the new second row. By block-wise column operations, then we have $M_{sj}$ as new $M_{23}$ is nonzero. On the other hand, we obtain that $M_{2j}, j\geq4$ is linearly dependent with $M_{23}$. Otherwise, we have three linearly independent blocks. Then by block-wise column operations, we have
\begin{eqnarray}
\label{eq:53}
M_2\sim M_3=
\bma
M_{11}&M_{12}&0&0&\cdots&0\\
M_{21}&wM_{11}&M_{23}&0&\cdots&0\\
0&M_{32}&M_{33}&M_{34}&\cdots&M_{3n_1}\\
\vdots&\vdots&\vdots&\vdots&\ddots&\vdots\\
0&M_{m_12}&M_{m_13}&M_{m_14}&\cdots&M_{m_1n_1}
\ema.
\end{eqnarray}
According to (\ref{eq:53}), there exist two cases:

When $w\neq0$, if $x_{23}=0, y_{23}\neq0$ or $x_{23}\neq0, y_{23}=0$, add the second row to the first row, then the first row has three linearly independent blocks. If $x_{23}\neq0$ and $y_{23}\neq0$, then $M_{21}, wM_{11}, x_{23}M_{11}+y_{23}M_{12}$ are three linearly independent blocks. These cases are all solved in (i).

When $w=0$, if $x_{23}\neq0$, add the second row to the first row, then the first row has three linearly independent blocks. If $x_{23}=0, y_{23}\neq0$, we have
\begin{eqnarray}
\label{eq:54}
M_3=
\bma
M_{11}&M_{12}&0&0&\cdots&0\\
M_{21}&0&y_{23}M_{12}&0&\cdots&0\\
0&M_{32}&M_{33}&M_{34}&\cdots&M_{3n_1}\\
\vdots&\vdots&\vdots&\vdots&\ddots&\vdots\\
0&M_{m_12}&M_{m_13}&M_{m_14}&\cdots&M_{m_1n_1}
\ema.
\end{eqnarray}

Without loss of generality, we add the first column of (\ref{eq:54}) to the fourth column, then we have $M_{i4}, 3\leq i\leq m_1$ is linearly dependent with $M_{11}$. Otherwise, $M_{11}$, $M_{21}$ and $M_{i4}$ are three linearly independent blocks. In the same way, we obtain that $M_{ij}, 3\leq i\leq m_1, 3\leq j\leq n_1$ is linearly dependent with $M_{11}$. By block-wise row and column operations, we may assume $M_{34}=x_{34}M_{11}, x_{34}\neq0$. Then add to the second row, $M_{21}$, $M_{33}+y_{23}M_{12}$ and $M_{34}$ are three linearly independent blocks. So we obtain that $M_{ij}, 3\leq i\leq m_1, 4\leq j\leq n_1$ are all zero. By block-wise row operations, then we have
\begin{eqnarray}
\label{eq:56}
M_3\sim M_4=
\bma
M_{11}&M_{12}&0&0&\cdots&0\\
M_{21}&0&y_{23}M_{12}&0&\cdots&0\\
0&M_{32}&x_{33}M_{11}&0&\cdots&0\\
0&M_{42}&0&0&\cdots&0\\
\vdots&\vdots&\vdots&\vdots&\ddots&\vdots\\
0&M_{m_12}&0&0&\cdots&0
\ema.
\end{eqnarray}

From (\ref{eq:56}), add the first column to the second column, then we have $M_{i2}, i\geq3$ is linear combination of $M_{11}+M_{12}$ and $M_{21}$. We may assume $M_{i2}=a_{i2}(M_{11}+M_{12})+b_{i2}M_{21}, i\geq3$. If one of $a_{i2}$ is nonzero, say $M_{s2}, 3\leq s\leq m_1$, then multiply an appropriate constant $k$ to the first column of (\ref{eq:56}) and add to the second column such that
\begin{eqnarray}
\label{eq:57}
\det
\left|\begin{array}{ccc}
a_{s2}&a_{s2}&b_{s2}\\
k&1&0\\
0&0&k
\end{array}\right|=k(1-k)a_{s2}\neq0.
\end{eqnarray}
From (\ref{eq:57}), we obtain that $kM_{11}+M_{12}, kM_{21}, M_{s2}$ are three linearly independent blocks.

Besides, if $a_{i2}=0$ and one of $M_{i2}$ is nonzero, say $M_{s2}$. We multiply an appropriate constant $k$ to the first row, then add the second and third row to the first row such that $kM_{11}+M_{21}, kM_{12}+M_{s2}, y_{23}M_{12}+x_{33}M_{11}$ are three linearly independent blocks. When $M_{i2}$ is zero, we have
\begin{eqnarray}
\label{eq:58}
M_4=
\bma
M_{11}&M_{12}&0&0&\cdots&0\\
M_{21}&0&y_{23}M_{12}&0&\cdots&0\\
0&0&x_{33}M_{11}&0&\cdots&0\\
0&0&0&0&\cdots&0\\
\vdots&\vdots&\vdots&\vdots&\ddots&\vdots\\
0&0&0&0&\cdots&0
\ema.
\end{eqnarray}
According to (\ref{eq:58}), there are two cases. If $x_{33}\neq0$, by applying block-wise row operations, we have $M_{11}+M_{21}, M_{12}$ and $M_{12}+x_{33}M_{11}$ are three linearly independent blocks. If $x_{33}=0$, we have
\begin{eqnarray}
\label{eq:59}
M=
\bma
M_{11}&M_{12}&0&0&\cdots&0\\
M_{21}&0&y_{23}M_{12}&0&\cdots&0\\
0&0&0&0&\cdots&0\\
0&0&0&0&\cdots&0\\
\vdots&\vdots&\vdots&\vdots&\ddots&\vdots\\
0&0&0&0&\cdots&0
\ema,
\end{eqnarray}
where $y_{23}\neq0$.
Thus, we obtain
\begin{eqnarray}
\rank M
&\leq&
\rank M_{11}+\rank M_{21}+\rank \bma M_{12}&0\\0&y_{23}M_{12}\ema\nonumber\\
&\leq&
3\rank M^\Gamma.
\end{eqnarray}
In the same way, we have $\rank M^\G\leq3\rank M$. So we have finished the proof of (iii.a).

(iii.b) We consider  $M_{ij}, i\geq2, j\geq3$ are all zero blocks. By applying block-wise row operations, we have
\begin{eqnarray}
\label{eq:85}
M=
\bma
M_{11}&M_{12}&0&\cdots&0\\
M_{21}&wM_{11}&0&\cdots&0\\
\vdots&\vdots&\vdots&\ddots&\vdots\\
M_{m_11}&M_{m_12}&0&\cdots&0
\ema.
\end{eqnarray}
From (\ref{eq:85}), we obtain that $M_{i1}$ is linear combination of $M_{11}$ and $M_{21}$, $i\geq3$. Otherwise, the first row of $M^T$ has three independent blocks. Then we have
\begin{eqnarray}
\label{eq:63}
M=
\bma
M_{11}&M_{12}&0&\cdots&0\\
M_{21}&wM_{11}&0&\cdots&0\\
0&M_{32}&0&\cdots&0\\
\vdots&\vdots&\vdots&\ddots&\vdots\\
0&M_{m_12}&0&\cdots&0
\ema.
\end{eqnarray}

If $M_{i2}$ is zero, $i\geq3$, we have
\begin{eqnarray}
M=
\bma
M_{11}&M_{12}&0&\cdots&0\\
M_{21}&wM_{11}&0&\cdots&0\\
0&0&0&\cdots&0\\
\vdots&\vdots&\vdots&\ddots&\vdots\\
0&0&0&\cdots&0
\ema,
\end{eqnarray}
this case is solved by Lemma \ref{le:kr=rr3}.

If one of $M_{i2}$ is nonzero, $i\geq3$, say $M_{s2}$. Then $M_{s2}$ is linear combination of $M_{12}+M_{11}$ and $M_{21}+wM_{11}$. We assume $M_{s2}=a_3M_{11}+b_3M_{12}+c_3M_{21}$ and at least one of $a_3, b_3, c_3$ is not zero. Then multiply an appropriate constant $k$ to the first column of (\ref{eq:63}) and add to the second column such that
\begin{eqnarray}
\label{eq:65}
\det
\left|\begin{array}{ccc}
k&1&0\\
w&0&k\\
a_3&b_3&c_3
\end{array}\right|=-b_3k^2-wc_3+a_3k\neq0.
\end{eqnarray}
If at least one of $b_3, wc_3$ and $a_3$ is not zero, then (\ref{eq:65}) holds. We consider $b_3, wc_3$ and  $a_3$ are all zero. Because $M_{s2}$ is nonzero, we have $c_3$ is nonzero. So we have
\begin{eqnarray}
M=
\bma
M_{11}&M_{12}&0&\cdots&0\\
M_{21}&0&0&\cdots&0\\
0&c_3M_{21}&0&\cdots&0\\
0&0&0&\cdots&0\\
\vdots&\vdots&\vdots&\ddots&\vdots\\
0&0&0&\cdots&0
\ema,
\end{eqnarray}
which case is solved by (\ref{eq:59}). So we have finished the proof of (iii.b).

Thus, we obtain that Conjecture \ref{cj:1} holds for the matrix of Schmidt rank three.

\end{proof}

\bibliographystyle{unsrt}

\bibliography{Entanglementdistillationintermsofaconjecturedmatrixinequality}

\end{document}